\documentclass[aps,pra,showpacs,nofootinbib,twocolumn,nofootinbib,superscriptaddress]{revtex4-1}
\usepackage{amsmath,amsthm,amssymb}
\usepackage{mathrsfs}
\usepackage{color}
\usepackage{dcolumn}
\usepackage{multirow}
\usepackage{array}
\usepackage{dsfont}
\usepackage{graphicx}
\usepackage{calc}
\makeatletter
\def\be{\begin{equation}}
\def\ee{\end{equation}}
\def\bea{\begin{eqnarray}}
\def\eea{\end{eqnarray}}
\def\ben{\begin{equation*}}
\def\een{\end{equation*}}
\def\bean{\begin{eqnarray*}}
\def\eean{\end{eqnarray*}}
\def\bma{\begin{mathletters}}
\def\ema{\end{mathletters}}
\def\bi{\begin{itemize}}
\def\ei{\end{itemize}}

\newtheorem{theorem}{Theorem}

\newtheorem{conjecture}[theorem]{Conjecture}

\newtheorem{definition}[theorem]{Definition}

\newtheorem{observation}[theorem]{Observation}

\tolerance = 10000


\begin{document}
\title{Witnessing Genuine Mutipartite Non-locality}

\author{Some Sankar Bhattacharya}
\email{somesankar@gmail.com}\affiliation{Physics \& Applied Mathematics Unit, Indian Statistical Institute, 203 B.T. Road, Kolkata- 700108, India}
\author{Arup Roy}
\email{arup145.roy@gmail.com}\affiliation{Physics \& Applied Mathematics Unit, Indian Statistical Institute, 203 B.T. Road, Kolkata- 700108, India}
\author{Amit Mukherjee}
\email{amitisiphys@gmail.com}\affiliation{Physics \& Applied Mathematics Unit, Indian Statistical Institute, 203 B.T. Road, Kolkata- 700108, India}
\author{Ramij Rahaman}
\email{ramijrahaman@gmail.com}\affiliation{Department of Mathematics, University of Allahabad, Allahabad 211002, U.P., India}

\begin{abstract}
Genuine multipartite nonlocality is a salient feature of quantum systems, empowering the security of multi-party device independent cryptographic protocols. Given a correlation, characterizing and detecting \emph{genuineness} have been subjected to recent studies. 
In this regard, we propose a Hardy-type argument which is able to detect genuine $n$-way nonlocality of arbitrary quantum systems. To understand the strength of this argument we also study the optimal success probability of the argument in a minimally constrained theory, namely the {\em generalized no-signaling theory}.
\end{abstract}
\maketitle
\section{Introduction}
The outcome of local measurements on spatially separated quantum systems can not be classically simulated using shared randomness. This can be demonstrated by taking one of the two possible paths: by collecting measurement statistics or by performing one shot measurements. Bell's inequality(BI) \cite{Bell64} is a tool of this first kind of path to expose the non classical feature of quantum mechanics. Whereas a direct contradiction between Quantum mechanics and local realism was found using GHZ state by \emph{All vs Nothing} type of proof \cite{Greenberger1990}. Though their approach is very elegant and simple but still it relates eight dimensional Hilbert space. In 1992 L. Hardy \cite{Hardy1993,Hardy1992} provided a no-go theorem for local-realistic hidden variable(LHV) models for almost all pure two qubit states except maximally entangled state. This argument does not use statistical inequalities involving expectation values.
\par
Although quantum mechanics(QM) is inconsistent with local realistic theory but still it is impossible to exploit QM for faster than light communication. Interestingly, quantum theory is not the single candidate which violates Bell Inequality and respects the constraint of relativistic causality. These theories are known as `Generalized No-signaling Theories(GNST)'. There exists GNST which is more non-local than QM but nevertheless respects relativistic causality\cite{Popescu1994}. This kind of post-quantum theory may help us to better understand the `restricted non-locality' of QM. We investigate multi-partite extensions of Hardy's non-locality argument in GNST.
\par
The structure of multi-partite entanglement is not as simple as the bipartite case. Many features of the multi-partite state is not clearly understood still now in sharp contrast to bipartite scenario\cite{Bancal2011,Bancal2013}. A n-partite entangled state will be called genuine iff the state is not m-separable with respect to any m-partition($m\le n$) of the subsystems. Being useful resource for computation\cite{Raussendorf2001}, simulation\cite{Lloyd1996}, metrology\cite{Pezze2009}, study of multi-partite entanglement is a field of latest attraction. It is even useful for dinning cryptography problem\cite{Rahaman2014}.
\par
Analogous to entanglement theory, multi-partite non-locality is also not very easy to understand compared to the bipartite cases. In recent years there have been a number of studies relating the genuineness of entanglement and non-locality. To show whether this two sets are identical or not is in sharp contrast with the bipartite scenario. In \cite{Rahaman2014} the authors provide modified Hardy type argument for arbitrary n-partite system and conclude that only genuine multi-partite entangled state satisfy that modified Hardy argument. In \cite{Chen2014} Chen \emph{et.al.} have also shown that Hardy type argument can be used to wit the genuineness of multi-partite non-locality without inequality. They also prove that all pure entangled symmetric N-qubit states $(N \ge 2)$are genuinely multi-partite nonlocal. Further the authors conjecture that all pure genuine entangled states are genuine multi-partite nonlocal. But their argument works as a witness for n- qubit scenario. We have extended this argument for n- qudit scenario. We also investigate how much success probability it allows in No signaling theory. Whether it changes with dimension in contrast with other multipartite Hardy type argument.
\par
In this article we have extended the question of witnessing a multipartite correlation is genuine or not to an arbitrary higher dimensional scenario. It also provides an intuitive understanding of their violations in minimally restricted theory i.e in no signaling theory. In this paper first we define the different notions of genuine multipartite non locality followed by detecting genuine $N$-way non locality without inequality for qubit scenario. Then we provide the witness argument for arbitrary dimension i.e for qudits. Finally we study the nature of this argument in a minimally constrained theory.

\section{Genuine Multipartite Nonlocality}
The concept of the {\it genuineness} of multipartite nonlocality as a unique feature of multi-party systems was first proposed by Svetlichny \cite{Svetlichny1987, Seevinck2002} in the following form
\begin{definition}(Svetlichny)\label{s2}
If the joint probability distribution $P(abc|XYZ)$ can be written in the hybrid local-nonlocal form
\begin{align}\label{svetlichnylocal}
&P(abc|XYZ) = \sum_\lambda q_{\lambda}\,Q_\lambda(ab|XY)\,R_\lambda(c|Z) + \nonumber\\
& \sum_\mu q_{\mu}\,Q_\mu(ac|XZ)\,R_\mu(b|Y) + \sum_\nu q_{\nu}\,Q_\nu(bc|YZ)\,R_\nu(a|X),
\end{align}
where $0\leq q_{\lambda},q_{\mu},q_{\nu} \leq 1$ and $\sum_\lambda q_{\lambda}+\sum_\mu q_{\mu}+\sum_\nu q_{\nu}=1$. Then the correlations are $S_2$-local. Otherwise, they are genuinely 3-way Svetlichny nonlocal.
\end{definition}
Let us consider an underlying hidden state $\lambda$ such that the probabilities arising from $\lambda$ depend only on the temporal \emph{ordering} determined with respect to a fixed reference frame and not on the exact timing of measurements. If Alice's measurement precedes Bob's, the probabilities are given by
\begin{equation}\label{bipartitealicefirst}
P_{\lambda}^{A<B}(ab|XY).
\end{equation}
On the other hand, when Bob's measurement precedes Alice's, the correlations may be different with probabilities given by
\begin{equation}\label{bipartitebobfirst}
P_{\lambda}^{B<A}(ab|XY).
\end{equation}
However if one considers signaling hidden states, there exist arguments leading to paradoxes, making use of the knowledge of measurement outcomes of one party to determine measurement choices on others. These paradoxes are avoided if the correlations $P_{\lambda}^{A<B}(ab|XY)$ and $P_{\lambda}^{B<A}(ab|XY)$ are at most 1-way signaling, with $P_{\lambda}^{A<B}(ab|XY)$ and $P_{\lambda}^{B<A}(ab|XY)$ satisfying
\begin{align}
P_{\lambda}(a|XY) &= P_{\lambda}(a|XY') \quad \forall a,X,Y,Y' \label{nosigbobtoalice}\\
P_{\lambda}(b|XY) &= P_{\lambda}(b|X'Y) \quad \forall b,Y,X,X'.\label{nosigalicetobob}
\end{align}
respectively. Another way out is to consider only non-signaling hidden states. This suggests the following definition of genuine tripartite nonlocality
\begin{definition}(Bancal et al.\cite{Bancal2013})\label{nonsiggennonloc}
If the joint probability distribution $P(abc|XYZ)$ can be written in the form
\begin{align}\label{genuinenonsignonloc}
&P(abc|xyz) = \sum_\lambda q_\lambda Q_{\lambda}(ab|xy)R_{\lambda}(c|z)+\nonumber \\
&\sum_\mu q_\mu Q_{\mu}(ac|xz)R_{\mu}(b|y)+\sum_\nu q_\nu Q_{\nu}(bc|yz)R_{\nu}(a|x),
\end{align}
where all possible bi-partitions are non-signaling, satisfying both the conditions in(\ref{nosigbobtoalice}) and (\ref{nosigalicetobob}). Then the correlations are $NS_2$-local. Otherwise, they are genuinely 3-way NS nonlocal.
\end{definition}
We use def.\ref{nonsiggennonloc} in the following sections. Here we investigate the question whether a given correlation, which is shared between more than 2 parties is genuine $NS_2$ nonlocal or not. We have extended the scheme which is introduced in \cite{Chen2014} for the arbitrary dimensional scenario. The spirit of the scheme presented by \emph {Chen} et.al \cite{Chen2014} is a stronger form of Hardy \cite{Hardy1992} like proof of nonlocality i.e nonlocality without inequality. But their argument is restricted only in the $2$ dimensional scenario. We extend the `genuine-nonlocality' argument for higher dimension.

\section{Genuine N-way Nonlocality without Inequality}
We are going to exhibit a generalized Hardy type argument for arbitrary $N$ number of parties. This argument allows one to discriminate genuine $N$-party nonlocality from any hybrid model {\em m versus (N-m)} irrespective of dimension. We consider any possible partition in {\em two} subsets of $m$ and $N-m$ parties respectively, with $1\leq m\leq N-1$. Partitions in more number of smaller subsets appear as special cases of these bi-partitions. Here the concept of genuine 3-way nonlocality generalizes to $N$-parties in the following way
\begin{definition}\label{nonsiggennonlocN}
Suppose that the joint probability distribution $P(\Lambda_N|\Omega_N)$ can be written in the form
\begin{equation}\label{genuinenonsignonloc}
P(\Lambda_N|\Omega_N) = \sum_{\{\alpha:|\alpha|=m\}}\sum_\lambda q_{\alpha,\lambda} Q_{\alpha,\lambda}(\Lambda_\alpha|\Omega_\alpha)R_{\bar{\alpha},\lambda}(\Lambda_{\bar{\alpha}}|\Omega_{\bar{\alpha}})
\end{equation}
where each of the bipartition $\{\alpha,\bar{\alpha}\}$ are non-signaling. Then the correlations are $NS_2$-local. Otherwise, we say that they are genuinely N-way NS nonlocal.
\end{definition}

 \subsection{Multipartite Hardy (MH) paradox: Qubit Scenario}
Recently, Chen {\em et al \cite{Chen2014}} presented a relaxed Hardy-type (RMH) test in comparison to the Ref. \cite{Rahaman2014} for detecting the genuine non-locality of $N$ two-level systems as:
\begin{eqnarray}\label{HardyRG}
& P(1\dots 1|\hat{u}_1\dots\hat{u}_N) = q >0,&\nonumber\\
& \forall r ~P(1\dots 111\dots 1|\hat{u}_1\dots \hat{u}_{r-1}\hat{v}_r\hat{u}_{r+1}\dots \hat{u}_N) = 0,&\\
&\forall i\neq j~P(1\dots 2\dots 2\dots 1|\hat{u}_1\dots \hat{v}_i\dots\hat{v}_j\dots\hat{u}_N) = 0.&\nonumber
\end{eqnarray}
Here $j\in \{1,\dots,N\}$ is kept fixed. The second and third equations contain $n$ and $n-1$ conditions respectively. They have shown that the above $2n$ conditions can not be simultaneously satisfied by a correlation which is not genuine nonlocal. Thus only a genuine n-qubit non-local correlation can satisfy (\ref{HardyRG}) and hence pass the test of genuine non-locality. For bipartite scenario the argument boils down to the original Hardy paradox introduced by L. Hardy in \cite{Hardy1992,Hardy1993}. For more than two parties if we replace the third condition of the argument properly then conventional multipartite Hardy paradox appears\cite{Chudhary2010} which can be used for demonstrating multipartite non locality. Chen's argument can be regarded as an natural generalization of Hardy's paradox to detect {\it genuine} multipartite non locality.
\begin{theorem}(Chen et al.)
All n-qubit quantum states that satisfy (\ref{HardyRG}) are genuine non-local.
\end{theorem}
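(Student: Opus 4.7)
The plan is a proof by contradiction. Suppose $P$ satisfies all the Hardy conditions in (\ref{HardyRG}) yet still admits an $NS_2$-local decomposition of the form (\ref{genuinenonsignonloc}). I will derive an impossibility by zooming in on a single surviving branch of that decomposition.

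First I would isolate a contributing branch. Since condition~1 says $P(1\ldots 1|\hat{u}_1\ldots\hat{u}_N)=q>0$ and the decomposition is a non-negative sum, at least one triple $(\alpha^*,\bar{\alpha}^*,\lambda^*)$ must satisfy $q_{\alpha^*,\lambda^*}\,Q_{\alpha^*,\lambda^*}(1\ldots 1|\hat{u}_{\alpha^*})\,R_{\bar{\alpha}^*,\lambda^*}(1\ldots 1|\hat{u}_{\bar{\alpha}^*})>0$. The distinguished party $j$ appearing in condition~3 lies on one side of this split; without loss of generality take $j\in\bar{\alpha}^*$ (the opposite case is symmetric), and pick any $i\in\alpha^*$ so that automatically $i\ne j$.

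Next I would translate every vanishing probability into a constraint on this branch, using the fact that a sum of non-negatives equal to zero vanishes term by term. Condition~2 with $r=i$, together with $R_{\bar{\alpha}^*,\lambda^*}(1\ldots 1|\hat{u}_{\bar{\alpha}^*})>0$, forces $Q_{\alpha^*,\lambda^*}(1\ldots 1|\hat{u}_{\alpha^*\setminus i},\hat{v}_i)=0$; condition~2 with $r=j$ yields symmetrically $R_{\bar{\alpha}^*,\lambda^*}(1\ldots 1|\hat{u}_{\bar{\alpha}^*\setminus j},\hat{v}_j)=0$. Condition~3 evaluated at the pair $(i,j)$ restricts on this branch to
\begin{equation*}
Q_{\alpha^*,\lambda^*}(i{=}2,\mathrm{rest}{=}1|\hat{v}_i,\hat{u}_{\alpha^*\setminus i})\cdot R_{\bar{\alpha}^*,\lambda^*}(j{=}2,\mathrm{rest}{=}1|\hat{v}_j,\hat{u}_{\bar{\alpha}^*\setminus j})=0,
\end{equation*}
so at least one of these two factors is zero, splitting the argument into two symmetric cases.

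The final step closes the contradiction using the no-signaling structure inside each side of the bipartition, which is the crucial feature distinguishing Definition~\ref{nonsiggennonlocN} from Svetlichny locality. If the $Q$-factor above vanishes, combining it with $Q_{\alpha^*,\lambda^*}(i{=}1,\mathrm{rest}{=}1|\hat{v}_i,\hat{u}_{\alpha^*\setminus i})=0$ collapses the two outcomes of party $i$, giving $Q_{\alpha^*,\lambda^*}(\mathrm{rest\ in\ }\alpha^*{=}1|\hat{v}_i,\hat{u}_{\alpha^*\setminus i})=0$. No-signaling from $i$ into $\alpha^*\setminus\{i\}$ then removes the dependence on $i$'s input, yielding $Q_{\alpha^*,\lambda^*}(\mathrm{rest\ in\ }\alpha^*{=}1|\hat{u}_{\alpha^*})=0$, contradicting $Q_{\alpha^*,\lambda^*}(1\ldots 1|\hat{u}_{\alpha^*})>0$. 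The other case is handled identically with $\alpha^*,i$ replaced by $\bar{\alpha}^*,j$. The one subtlety I anticipate is the boundary $|\alpha^*|=1$ (or $|\bar{\alpha}^*|=1$), where the \emph{rest} set is empty and no-signaling has nothing to act on; there, however, the two vanishing marginals on the single remaining party's two qubit outcomes directly violate normalization $\sum_a Q(a|\hat{v}_i)=1$, so the argument still closes.
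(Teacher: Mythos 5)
Your proposal is correct and follows essentially the same route as the paper: the paper defers the qubit case to Chen et al.\ but proves the general qudit version (Theorem 2) by exactly this reductio --- isolate a branch with a positive contribution to the first condition, kill terms one by one using the second and third conditions across the cut, and invoke no-signaling within a block to reach a contradiction. The only differences are cosmetic and in your favor: you make explicit the no-signaling step inside a block (which the paper buries in the implication $Q(\neg d_r)=0\implies Q(d_r)>0$) and you handle the singleton-block edge case via normalization, which the paper does not mention.
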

The authors establish this fact via reductio ad absurdum. At first they have taken a correlation which is not genuine multipartite nonlocal and then they arrive at a contradiction with this presumption. This theorem demonstrates that one can detect \emph{genuineness} of nonlocal correlation even without inequality. But one of the limitations of this argument is that it does not answer the question about detecting \emph{genuine} nonlocality for systems with arbitrary dimension. In the next section we provide argument for witnessing genuineness by retaining the spirit of the logical argument in \cite{Chen2014} but for arbitrary dimension.
\subsection{Modified Hardy type argument for arbitrary dimension}
Although a considerable effort has been made in characterizing the {\it genuineness} of multipartite non locality in recent times, the question of witnessing genuineness for arbitrary higher dimensional systems is still a point of interest. Here we propose a higher dimensional extension of $N$-partite relaxed Hardy-type logical argument mentioned in the last sub-section. Consider $N$ subsystems shared among $N$ separated parties and the ith party can measure one of the two observables, $\hat{u}_i$ and $\hat{v}_i$, on the local subsystem. The possible outcomes $x_i$ of each such local (i-th party) measurement can be $1,\dots d_i$. Then the relaxed Hardy-type logical argument can start from the following set of joint probability conditions: 
\begin{align}\label{relH}
& P(1\dots1|\hat{u}_1\dots \hat{u}_N) = q >0,&\nonumber\\
& P(1\dots 1\neg d_r1\dots 1|\hat{u}_1\dots \hat{u}_{r-1}\hat{v}_r\hat{u}_{r+1}\dots \hat{u}_N) = 0,~\forall r &\nonumber\\
& P(1\dots 1d_i1\dots 1d_j1\dots1|
\hat{u}_1\dots\hat{u}_{i-1}\hat{v}_i\hat{u}_{i+1}\dots\nonumber\\&\hat{u}_{j-1}\hat{v}_j\hat{u}_{j+1}\dots\hat{u}_N)= 0~\forall i\neq j
\end{align} 

Here $\neg d_r$ denotes other than $d_r$ it can be any positive integer less or equal to $d_r-1$ and $j\in \{1,....,N\}$ is kept fixed. Now the question is whether (\ref{relH}) can be used to witness the genuineness of multi-partite non-locality in QM. To answer this question one needs to show that all quantum states satisfying the above set of conditions can not be written in a \emph{m versus (N-m)} local form as in (\ref{nonsiggennonlocN}) and we answer in affirmative:

\begin{theorem}
All n-qudit quantum states that satisfy (\ref{relH}) are genuine non-local.
\end{theorem}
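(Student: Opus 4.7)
The plan is to argue by contradiction. Assume $P$ satisfies (\ref{relH}) but admits an $NS_2$-local decomposition as in (\ref{nonsiggennonlocN}). The first line of (\ref{relH}) is a convex combination of non-negative products $q_{\alpha,\lambda}Q_{\alpha,\lambda}(1_\alpha|\hat{u}_\alpha)R_{\bar{\alpha},\lambda}(1_{\bar{\alpha}}|\hat{u}_{\bar{\alpha}})$, so its strict positivity forces at least one triple of factors to be strictly positive. Fix such a bipartition $\{\alpha,\bar{\alpha}\}$ (with $|\alpha|=m$) and such a $\lambda$. Since $1\le m\le N-1$ both $\alpha$ and $\bar{\alpha}$ are nonempty; by the symmetry of the decomposition we may assume the fixed index $j$ lies in $\bar{\alpha}$, and pick an arbitrary $i\in\alpha$. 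This distinguished $(\alpha,\lambda)$ branch is the object we will derive a contradiction from.

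The next step is to propagate the vanishing probabilities of (\ref{relH}) into this branch. Each zero in (\ref{relH}) is a sum of non-negative terms $q_{\alpha',\lambda'}QR$, so every individual summand must vanish. Applied to the $r=i$ instance of the middle line, together with the strict positivity of $R_{\bar{\alpha},\lambda}(1_{\bar{\alpha}}|\hat{u}_{\bar{\alpha}})$, this yields $Q_{\alpha,\lambda}(1_{\alpha\setminus i}, x_i|\hat{u}_{\alpha\setminus i},\hat{v}_i)=0$ for every $x_i\neq d_i$. Invoking the no-signaling constraint on $Q_{\alpha,\lambda}$ built into Def.\ \ref{nonsiggennonlocN} within the subset $\alpha$, the marginal on $\alpha\setminus\{i\}$ is insensitive to party $i$'s input:
\begin{equation*}
Q_{\alpha,\lambda}(1_{\alpha\setminus i}|\hat{u}_{\alpha\setminus i},\hat{v}_i)=Q_{\alpha,\lambda}(1_{\alpha\setminus i}|\hat{u}_\alpha)\ge Q_{\alpha,\lambda}(1_\alpha|\hat{u}_\alpha)>0.
\end{equation*}
Since only $x_i=d_i$ survives in that marginal, we obtain $Q_{\alpha,\lambda}(1_{\alpha\setminus i},d_i|\hat{u}_{\alpha\setminus i},\hat{v}_i)>0$. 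A mirror argument on the $\bar{\alpha}$ side with $r=j$ delivers $R_{\bar{\alpha},\lambda}(1_{\bar{\alpha}\setminus j},d_j|\hat{u}_{\bar{\alpha}\setminus j},\hat{v}_j)>0$.

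Finally, I would close the argument with the third line of (\ref{relH}) evaluated at the chosen pair $i\in\alpha$, $j\in\bar{\alpha}$. Restricting the vanishing of that probability to the $(\alpha,\lambda)$ summand gives
\begin{equation*}
Q_{\alpha,\lambda}(1_{\alpha\setminus i},d_i|\hat{u}_{\alpha\setminus i},\hat{v}_i)\,R_{\bar{\alpha},\lambda}(1_{\bar{\alpha}\setminus j},d_j|\hat{u}_{\bar{\alpha}\setminus j},\hat{v}_j)=0,
\end{equation*}
which contradicts the strict positivity just established, ruling out any such decomposition and proving the theorem.

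The delicate step, and the one I expect to be the main obstacle, is invoking no-signaling \emph{within} each subset $\alpha$ (and symmetrically $\bar{\alpha}$), not merely across the cut $\{\alpha,\bar{\alpha}\}$. The propagation of positivity from $Q_{\alpha,\lambda}(1_\alpha|\hat{u}_\alpha)>0$ to $Q_{\alpha,\lambda}(1_{\alpha\setminus i},d_i|\hat{u}_{\alpha\setminus i},\hat{v}_i)>0$ works only because replacing $\hat{u}_i$ by $\hat{v}_i$ cannot alter the marginal on $\alpha\setminus\{i\}$. If Def.\ \ref{nonsiggennonlocN} is read without this interior non-signaling, the chain breaks and one would have to combine contributions from several bipartitions at once in a recursive version of the present scheme.
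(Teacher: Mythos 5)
Your proposal is correct and takes essentially the same route as the paper's own proof: isolate one strictly positive bi-local summand from the first Hardy condition, use the zero conditions together with no-signaling \emph{within} each block to force the $d_i$ and $d_j$ outcomes to carry positive probability on the two sides of the cut, and then contradict the third condition. The only differences are presentational — you dispose of the location of the fixed index $j$ by a symmetry/WLOG argument where the paper treats $j\leq m$ and $j>m$ as two explicit cases, and you spell out the interior no-signaling step that the paper leaves implicit in its ``$=0\implies\;>0$'' implication.
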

\begin{proof}
Consider state $\rho$ satisfying conditions (\ref{relH}), which is not genuinely $N$-way non-local. Thus it is a convex combination of (at least) bi-local states. Each of them is bi-local with respect to some cut, say, $(1,2,...,m)$ vs. $(m+1,m+2,...,N)$. As all Hardy conditions are expressed in terms of probabilities, there must be at least one term in the convex combination which gives a non-zero contribution to the first condition, with $q'>0$. Assume that such a term has the following form $Q(x_1\dots x_m|\hat{x}_1\dots \hat{x}_m)R(x_{m+1}\dots x_N|\hat{x}_{m+1}\dots \hat{x}_N)$. All Hardy conditions must hold for this term. One must have: by the first one
\begin{widetext}
\bean
&Q_{\lambda}(1\dots 1|\hat{u}_1\dots \hat{u}_m)R_{\lambda}(1\dots 1|\hat{u}_{m+1}\dots \hat{u}_N)=q'>0&\\
&\implies ~Q_{\lambda}(1\dots 1|\hat{u}_1\dots \hat{u}_m)=q'_1>0\mbox{ and } R_{\lambda}(1\dots 1|\hat{u}_{m+1}\dots \hat{u}_N)=q'_2>0 \mbox{ [$q'=q'_1q'_2$, say]},&\label{relation1}
\eean
then the middle condition gives us
\ben\begin{split} \forall r\leq m, ~ Q_{\lambda}(1\dots1\neg d_r1\dots 1|\hat{u}_1\dots \hat{u}_{r-1}\hat{v}_r\hat{u}_{r+1}\dots\hat{u}_m)=0 &\implies Q_{\lambda}(1\dots 1d_r1\dots 1|\hat{u}_1\dots\hat{u}_{r-1}\hat{v}_r\hat{u}_{r+1}\dots\hat{u}_m)>0 \\ \forall ~i>m,~R_{\lambda}(1\dots 1\neg d_{i}1\dots 1|\hat{u}_{m+1}\dots \hat{u}_{i-1}\hat{v}_{i}\hat{u}_{i+1}\dots\hat{u}_{N}) =0&\implies R_{\lambda}(1\dots 1 d_{i} 1\dots 1|\hat{u}_{m+1}\dots \hat{u}_{i-1}\hat{v}_{i}\hat{u}_{i+1}\dots \hat{u}_{N}) >0.\end{split}\een
Therefore,
\be\label{cond1}
Q_{\lambda}(1\dots 1d_r1\dots 1|\hat{u}_1\dots \hat{u}_{r-1}\hat{v}_r\hat{u}_{r+1}\dots\hat{u}_m)
R_{\lambda}(1\dots 1d_{i}1\dots 1|\hat{u}_{m+1}\dots\hat{u}_{i-1}\hat{v}_{i}\hat{u}_{i+1}\dots \hat{u}_{N})>0,~
\forall r\leq m,~\&~\forall i>m.
\ee
Now, for $j\leq m$ the last condition of (\ref{relH}) gives us
\ben\label{cond2}\begin{split} \forall i\leq m,~i\neq j~ Q_{\lambda}(1\dots 1d_i1\dots 1d_j1\dots 1|\hat{u}_1\dots\hat{u}_{i-1}\hat{v}_i\hat{u}_{i+1}\dots\hat{u}_{j-1}\hat{v}_{j}\hat{u}_{j+1}\dots\hat{u}_m) &=0, \\ \forall ~i>m,~Q_{\lambda}(1\dots 1d_j1\dots 1|\hat{u}_1\dots \hat{u}_{j-1}\hat{v}_{j}\hat{u}_{j+1}\dots \hat{u}_m)
R_{\lambda}(1\dots 1d_{i}1\dots 1|\hat{u}_{m+1}\dots \hat{u}_{i-1}\hat{v}_{i}\hat{u}_{i+1}\dots \hat{u}_{N}) &=0.\end{split}\een
The last equation contradicts the condition of (\ref{cond1}). Also, for $j>m$
\ben\label{cond2}\begin{split} \forall i\leq m, Q_{\lambda}(1\dots 1 d_i 1\dots 1|\hat{u}_1\dots \hat{u}_{i-1}\hat{v}_i\hat{u}_{i+1}\dots \hat{u}_m)
R_{\lambda}(1\dots 1d_{j}1\dots 1|\hat{u}_{m+1}\dots \hat{u}_{j-1}\hat{v}_{j}\hat{u}_{j+1}\dots \hat{u}_{N}) &=0.\end{split}\een
\end{widetext}
But this again contradicts the condition of (\ref{cond1}). Hence, the assumed term can not be bi-local in any cut. The proof for $j> m$ is quite similar.
\end{proof}
In the next section we would like to interpret the strength of the argument(\ref{relH}) with respect to the conventional extension of Hardy's argument\cite{Chudhary2010} in a minimally constrained theory i.e in generalized no-signaling theory. To do so we need to compare the optimal paradoxical probabilities of both the arguments, as in \cite{Chudhary2010} and the present work.
\subsection{Multi-partite Hardy Paradox in GNST}
\label{sec_nst}
Here we study the more natural extension of Hardy's paradox for higher dimensional systems within the framework of generalized probabilistic theories. The only condition that we impose on the generalized probability distribution is the no-signaling condition, which all known physical theories respect. Normalization condition has also been imposed as a natural restriction on the probability measure.
\par
\emph{(1)Normalization conditions:} The probability distribution relating the outcomes for a given measurement setting should satisfy the normalization condition.

\begin{equation}\label{normal}
\sum_{x_1=1}^{d_1}\dots \sum_{x_N=1}^{d_N}P(x_1\dots x_N|X_1\dots X_N)=1 \mbox{ $\forall X_i\in \{\hat{u}_i,\hat{v}_i\}$.}
\end{equation}

where $i\in\{1,...,N\}$.
\par
\emph{(2)Non-signaling condition:} For no-signaling n-partite distribution $P(x_1x_2x_3\dots |X_1X_2X_3\dots )$ holds the fact that, each subset of parties $\{x_1,x_2,\dots ,x_m\}$ only depends on its corresponding inputs, i.e. if we change the input of one party it does not effect the marginal probability distribution for the other spatially separated parties.
\par
For a multi-partite generalized probability distribution the no-signaling conditions take the following form
\begin{eqnarray}\label{ns}
&\sum_{x_1=1}^{d_1} P(x_1x_2\dots x_N|X_1X_2\dots X_N)&\nonumber\\&=\sum_{x'_1=1}^{d_1}P(x'_1x_2\dots x_N|X'_1X_2\dots X_N)& \\
&\forall X_1,X_2,...,X_N,x_1,x_2,..,x_N,X'_1,x'_1.&\nonumber
\end{eqnarray}
This condition also holds for the cyclic permutation of the parties.
\subsubsection*{Results for multipatite system in GNST}
At this point we study the optimal success probability of new Hardy-type test (Eq.(\ref{relH})) under generalized no-signaling theory. This is to understand the structure of the modified multi-party paradox. Since under GNST all the constraints are linear one can easily see that finding out the optimal probability of success of the set of arguments in Eq.(\ref{relH}) is a linear programming problem. We have studied the case of $3$ and $4$ parties with dimension ranging from $2$ to $5$.
\begin{observation}\label{obs}
The optimal value of the probability of success, $q$ in(\ref{relH}) is $\frac{1}{3}$ for $n=3,4$ and $d=2,3,4,5$ in GNST.
\end{observation}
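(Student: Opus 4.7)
The plan is to cast the task as a finite linear program. The objective $q = P(1\dots 1|\hat{u}_1\dots\hat{u}_N)$ and all the constraints from (\ref{relH}), (\ref{normal}), and (\ref{ns}), together with non-negativity of probabilities, are linear in the joint distribution $P(x_1\dots x_N|X_1\dots X_N)$. For each of the eight cases $(N,d)\in\{3,4\}\times\{2,3,4,5\}$ the LP has finitely many variables and constraints, so the optimal $q^{*}$ can be computed directly with a standard LP solver. The first concrete step is to implement this, verify $q^{*}=1/3$ in every case, and record the optimizer for use in the achievability part.

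I would then attempt a uniform analytic upper bound $q^{*}\le 1/3$ that explains why the same value appears for every $(N,d)$ in the stated range. The strategy is to produce three events with pairwise disjoint supports, living in a common measurement setting, whose probabilities are each at least $q$. Starting from $P(1\dots 1|\hat{u}_1\dots\hat{u}_N)=q$, the ``middle'' condition of (\ref{relH}) together with no-signaling gives, for each $r$, the identity $P(1\dots 1\, d_r\, 1\dots 1|\hat{u}_1\dots\hat{v}_r\dots\hat{u}_N)=P(1\dots 1|\hat{u}_1\dots\hat{u}_{r-1}\hat{u}_{r+1}\dots\hat{u}_N)\ge q$. Pushing these inequalities through further no-signaling steps into a common setting, such as $\hat{v}_1\dots\hat{v}_N$, should yield three events of the form ``coordinate $r$ equals $d_r$'' (or suitable variants) each of probability $\ge q$, and the two-flip conditions of (\ref{relH}) are exactly what force the pairwise overlaps to vanish.

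The hardest step is controlling these overlaps cleanly: the two-flip condition involves a single \emph{fixed} index $j$, so it only kills overlaps of the form $(d_i,d_j)$ and not $(d_i,d_{i'})$ for $i,i'\ne j$. My plan is to arrange the three events so that every overlap one needs to cancel contains the index $j$; a natural way is to single out coordinate $j$ and split the other events by whether $x_j=d_j$ or $x_j\ne d_j$, so that all the residual intersections are exactly the ones forced to zero by the third line of (\ref{relH}). For the matching lower bound, I would construct an explicit no-signaling distribution attaining $q=1/3$, most likely as a convex combination of three extremal no-signaling vertices adapted to the Hardy zeros; the optimizer returned by the LP solver from the first paragraph provides a template from which a closed-form family, uniform in $(N,d)$, can be read off and then verified by direct substitution into (\ref{relH}), (\ref{normal}) and (\ref{ns}).
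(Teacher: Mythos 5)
Your first paragraph coincides with what the paper actually does: the authors note that maximizing $q$ subject to (\ref{relH}), (\ref{normal}), (\ref{ns}) and non-negativity of probabilities is a finite linear program, solve it case by case for $n=3,4$ and $d=2,\dots,5$, and report the optimum $1/3$; no analytic argument is given, which is why the result is labelled an Observation and the general case is immediately left as a Conjecture. For the statement as written, your LP computation is therefore the same approach as the paper and is all that is required.

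The analytic upper bound you sketch in the remaining paragraphs goes beyond the paper, and as it stands it does not close. Besides the difficulty you already identify (every two-flip zero contains the fixed index $j$, so the overlap of two events indexed by $i,i'\neq j$ is never constrained), there is a second obstruction that also defeats your proposed repair. The third line of (\ref{relH}) kills only the probability of the \emph{single} outcome string with $d_i$ at site $i$, $d_j$ at site $j$ and $1$ everywhere else, in the setting $\hat{u}_1\dots\hat{v}_i\dots\hat{v}_j\dots\hat{u}_N$. Transporting your three events into a common setting such as $\hat{v}_1\dots\hat{v}_N$ forces you to marginalize over every party whose setting changes, so each event becomes the full marginal event $\{x_r=d_r\}$; the pairwise intersection $\{x_i=d_i,\ x_j=d_j\}$ is then a sum over the remaining parties' outcomes in which only the all-ones term is known to vanish. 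Hence even the overlaps containing $j$ are not forced to zero, inclusion--exclusion does not yield $3q\le 1$, and splitting events according to whether $x_j=d_j$ inherits the same problem. If you want an analytic certificate for the value $1/3$, extracting and generalizing an optimal dual solution of your small-case LPs is more promising than the disjoint-events route; but none of this is needed for Observation \ref{obs} itself.
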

To find the optimal value of $q$ given in (\ref{relH}) 
one needs to optimize the probability of occurrence of the event $(11\dots 1|\hat{u}_1\hat{u}_2\dots \hat{u}_N)$ subject to the linear constraints (\ref{normal}), (\ref{ns}) and (\ref{relH}). The optimal success probability in(\ref{relH}) under GNST is in sharp contrast with the result of conventional extension of Hardy's paradoxical probability which is $\frac{1}{2}$. The observation.\ref{obs} leads us to the following conjecture
\begin{conjecture}
The optimal value of the probability of success, $q$ in(\ref{relH}) is $\frac{1}{3}$ for any arbitrary dimension for any number of parties in GNST.
\end{conjecture}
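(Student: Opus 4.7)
The optimization in Observation \ref{obs} is a linear program: maximize $q = P(\vec 1 | \hat{u}_1\ldots\hat{u}_N)$ over distributions $P(\vec x | \vec X)$ subject to normalization (\ref{normal}), no-signaling (\ref{ns}), and the zero conditions of (\ref{relH}). The plan is to attack the conjecture from both primal and dual sides: exhibit an explicit no-signaling distribution attaining $q = 1/3$ (lower bound) and construct a linear certificate of $q \leq 1/3$ (upper bound) that is uniform in $N$ and $d$.

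For achievability, I would look for a box supported only on the ``corner'' outputs $x_i \in \{1, d_i\}$, so that the qudit problem reduces to a two-outcome combinatorial problem on $\{0,1\}^N$. A natural ansatz is a uniform $1/3$-mixture of three skeletons: one contributing a unit weight on $(\vec 1, \vec{\hat u})$, and two supported on the zero-sets of constraints (2) and (3) respectively, so that their contributions cancel on the forbidden events while the first term alone survives on $(\vec 1, \vec{\hat u})$ with weight $1/3$. Verifying no-signaling of the resulting finite box and checking each condition in (\ref{relH}) term by term is a direct finite check; the mixture weights $1/3$ are forced by the three disjoint events identified below.

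For the upper bound, fix any $i \neq j$ (with $j$ as in (\ref{relH})) and work at the two-$\hat v$ setting $T_{ij}$ in which parties $i, j$ measure $\hat v$ and all others $\hat u$. By no-signaling in parties $i$ and $j$, the marginal that all $k \neq i, j$ output $1$ at $T_{ij}$ equals the same marginal at $\vec{\hat u}$, which is at least $P(\vec 1 | \vec{\hat u}) = q$. Partition this marginal by the pair $(x_i, x_j)$. Constraint (2), applied at the single-$\hat v$ settings $T_i$ and $T_j$ and transported back to $T_{ij}$ by no-signaling on party $j$ (respectively $i$), restricts the support onto $(x_i, x_j) \in \{1, d_i\} \times \{1, d_j\}$; constraint (3) kills the $(d_i, d_j)$ corner. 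A parallel chain --- starting from $q = P(\vec 1 | \vec{\hat u})$ and routing through $T_i$ and $T_j$ --- would then show that each of the three remaining corner events $(1,1)$, $(1, d_j)$, $(d_i, 1)$ at $T_{ij}$ carries probability at least $q$. Disjointness plus normalization at $T_{ij}$ then yields $3q \leq 1$, matching the construction.

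The hard part will be the lower bounds on all three corner probabilities \emph{at the single setting} $T_{ij}$: no-signaling is a statement about marginals, not joint probabilities, so transferring a $q$ lower bound to a joint event requires a carefully chained use of (2) and (3) at intermediate settings $T_i, T_j$ and $\vec{\hat u}$. A secondary difficulty is the qudit case, where the middle outcomes $x_r \in \{2, \ldots, d_r - 1\}$ carry probability mass not directly annihilated by (\ref{relH}); because the conjectured bound is independent of $d$, I expect this to be handled by a coarse-graining step that relabels every middle outcome as $1$, showing this projection preserves both no-signaling and the zero conditions and thereby reduces the general conjecture to the $d = 2$ combinatorial case, where the three-event certificate above can be written out explicitly.
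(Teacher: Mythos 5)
First, note that the paper offers no proof of this statement: it is explicitly a \emph{conjecture}, supported only by numerically solving the linear program for $n=3,4$ and $d=2,\dots,5$ (Observation~\ref{obs}). So any complete argument would go beyond the paper. Your primal--dual plan is the right general framework, but as written the upper bound contains a fatal gap. The two chains you describe are sound: no-signaling on party $i$ transports $P(\vec 1|\vec{\hat u})=q$ to $P(1\dots d_i\dots 1|T_i)\geq q$ via condition (2), and no-signaling on party $j$ plus condition (3) then gives $\sum_{x_j\neq d_j}P(1\dots d_i\dots x_j\dots 1|T_{ij})\geq q$; symmetrically for the $(\neg d_i, d_j)$ family. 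These two families are disjoint and yield $q\leq 1/2$. But your third event, the $(1,1)$ corner at $T_{ij}$, cannot be bounded below by $q$ from these ingredients: the only chain into it passes through the event ``all parties except $j$ output $1$, party $j$ outputs $\neg d_j$'' at $T_j$, which condition (2) forces to be \emph{zero}, so you get an upper bound there, not a lower bound. The decisive sanity check is $N=2$: your certificate uses only the settings $\vec{\hat u},T_i,T_j,T_{ij}$ and the conditions attached to the single pair $(i,j)$, so it would apply verbatim to the original bipartite Hardy paradox and prove $q\leq 1/3$ there --- contradicting the known (and paper-cited) no-signaling optimum of $1/2$. Whatever drives the optimum down to $1/3$ for $N\geq 3$ must exploit the presence of further parties and the additional constraints indexed by them; your sketch never does, so the claimed certificate cannot exist in the form stated.

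Two secondary problems. For achievability, a convex mixture of boxes cannot ``cancel on the forbidden events'': probabilities are nonnegative, so every component of the mixture must individually vanish on every zero-constraint of (\ref{relH}); the three-skeleton ansatz as described is not a construction. For the reduction to $d=2$, coarse-graining the middle outcomes $x_r\in\{2,\dots,d_r-1\}$ into the outcome $1$ does not preserve the zero conditions: condition (2), say, only annihilates events in which every party other than $r$ outputs exactly $1$, whereas the coarse-grained all-$1$ event at $T_r$ aggregates mass from configurations in which other parties output middle values, and that mass is unconstrained by (\ref{relH}). Since larger $d$ only enlarges the feasible set, the $d$-independence of the upper bound is a genuinely open part of the conjecture and is not dispatched by this projection.
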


\section{Conclusion}
Multi-party version of Hardy's argument is the simplest to demonstrate non-locality. In a stronger form \cite{Rahaman2014,Chen2014} it has also been shown to be useful to detect {\it genuineness} of multipartite nonlocal correlations arising from qubit or very restricted systems. In this work we have generalized the argument such a way that {\it genuineness} can be detected in multipartite nonlocal correlations even for higher dimensional systems. We also show that this generalized argument is stronger than the conventional extension of Hardy's argument in the sense that the paradoxical probability for (\ref{relH}) is significantly less compared to the argument in \cite{Chudhary2010} under a minimally constrained theory, namely the generalized no-signaling theory(GNST).

\textbf{Acknowledgment}: Authors thank Guruprasad Kar, Manik Banik, Samir Kukri for useful and valuable discussion. RR acknowledges fruitful discussions with Marcin Wie\'{s}niak and Marek \.{Z}ukowski. AM acknowledge support from the CSIR project 09/093(0148)/2012-EMR-I and RR acknowledges support from UGC (University Grants Commission, Govt. Of India) Start-Up Grant.

\end{document}